\def\defn#1{\textbf{\textit{\boldmath #1}}}
\let\emph=\defn
\title{Complexity of Retrograde and Helpmate Chess Problems: \\
       Even Cooperative Chess is Hard}
\author{%
  Josh Brunner%
    \thanks{MIT Computer Science and Artificial Intelligence Laboratory,
      32 Vassar St., Cambridge, MA 02139, USA,
      \protect\url{{brunnerj,edemaine,dylanhen,wellman}@mit.edu}}
\and
  Erik D. Demaine\footnotemark[1]
\and
  Dylan Hendrickson\footnotemark[1]
\and
  Julian Wellman\footnotemark[1]
}
\date{}
\newif\ifabstract
\newif\iffull
\makeatletter \hypersetup{pdftitle={\@title}}}
 \gdef\xxxmark{%
   \expandafter\ifx\csname @mpargs\endcsname\relax 
     \expandafter\ifx\csname @captype\endcsname\relax 
       \marginpar{xxx}
     \else
       xxx 
     \fi
   \else
     xxx 
   \fi}
 \gdef\xxx{\@ifnextchar[\xxx@lab\xxx@nolab}
 \long\gdef\xxx@lab[#1]#2{\textbf{[\xxxmark #2 ---{\sc #1}]}}
 \long\gdef\xxx@nolab#1{\textbf{[\xxxmark #1]}}
\gdef\fps@figure{!htbp}}
\let\realbfseries=\bfseries
\def\bfseries{\realbfseries\boldmath}
\newtheorem{theorem}{Theorem}[section]
\newtheorem{lemma}[theorem]{Lemma}
\newtheorem{corollary}[theorem]{Corollary}
\theoremstyle{definition}
\newtheorem{problem}{Problem}
\begin{document}
\maketitle

\begin{abstract}
  We prove PSPACE-completeness of two classic types of Chess problems when
  generalized to $n \times n$ boards.
  A ``retrograde'' problem asks whether it is
  possible for a position to be reached from a natural starting position,
  i.e., whether the position is ``valid'' or ``legal'' or ``reachable''.
  Most real-world retrograde Chess problems ask for the last few moves of
  such a sequence; we analyze the decision question which gets at the
  existence of an exponentially long move sequence.
  A ``helpmate'' problem asks whether
  it is possible for a player to become checkmated
  by any sequence of moves from a given position.
  A helpmate problem is essentially a cooperative form of Chess, where both
  players work together to cause a particular player to win; it also arises
  in regular Chess games, where a player who runs out of time (flags) loses
  only if they could ever possibly be checkmated from the current position
  (i.e., the helpmate problem has a solution).
  Our PSPACE-hardness reductions are from a variant of a puzzle game
  called Subway Shuffle.
\end{abstract}

\section{Introduction}

\begin{wrapfigure}{r}{2.2in}
  \centering
  \includegraphics[width=\linewidth]{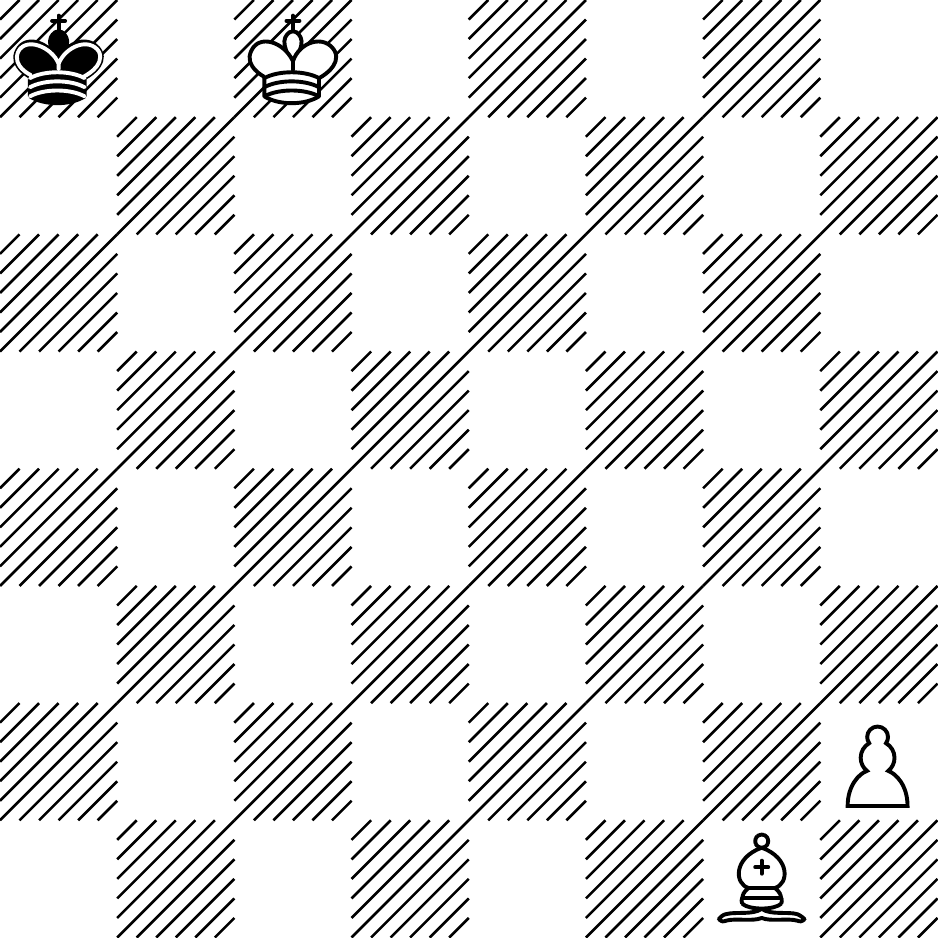}
  \caption{The retrograde Chess problem on the cover of
    \cite{Smullyan-Sherlock}.
    What move did Black just make?
    What move did White make before that?}
  \label{fig:sherlock}
\end{wrapfigure}

\defn{Chess problems}
\cite{Nunn,Smullyan-Sherlock,Smullyan-Arabian,problems-wiki}
are puzzles involving Chess boards/pieces/positions,
often used as exercises to learn how to play Chess better.
Perhaps the most common family of Chess problems are of the form
\emph{mate-in-$k$}: is it possible to force a win within $k$ moves
from the given game position (board state and who moves next)?
While this problem can be solved in polynomial time for $k = O(1)$,
it is PSPACE-complete if $k$ is polynomial in the board size $n$
\cite{Storer-1983}
and EXPTIME-complete if $k$ is exponential in the board size (or infinite)
\cite{Fraenkel-Lichtenstein-1981}.

In this paper, we analyze the complexity of two popular families of Chess
problems that are fundamentally \defn{cooperative}: they ask whether gameplay
could possibly produce a given result, which is equivalent to the two players
(Black and White) cooperating to achieve the goal.  This cooperative means the
two players effectively act as a single player (in the sense that quantifiers
no longer alternate),
placing the problem in PSPACE (see Lemma~\ref{lem:PSPACE}).
We prove that the following two problems are in fact PSPACE-complete.

First, \defn{retrograde Chess problems} ask about the moves leading
up to a given position.
For example, Figure~\ref{fig:sherlock} gives the puzzle on the cover of
Raymond Smullyan's classic book \textit{The Chess Mysteries of Sherlock Holmes}
\cite{Smullyan-Sherlock}.
Other classic books with Chess problems (and
descriptions of how to do retrograde analysis) are by Nunn \cite{Nunn} and
Smullyan \cite{Smullyan-Arabian}.
Many retrograde Chess problems (including Figure~\ref{fig:sherlock}) ask what
the final few moves of the two players must have been to reach this position.
Fundamentally, these problems are about the \defn{reachability} of the given
position from the starting position, and focus on the final $k$
moves where the puzzle is most interesting (as the moves are most forced).
For exponentially large $k$, we find a core underlying decision problem:
can the given position be reached at all from the starting position?
Such positions are often called ``valid'' or ``legal'' because they are
possible results of valid/legal gameplay; in Section~\ref{sec:reachability},
we prove that characterizing such positions is PSPACE-complete.

Second, \defn{helpmate Chess problems} \cite{chess-compendium,helpmate-wiki}
ask whether it is possible for a player to win via checkmate by any sequence
of moves, i.e., when the players cooperate (help each other, hence
``helpmate'').
This problem could also naturally be called \emph{Cooperative Chess}, by
analogy to Cooperative Checkers, which is NP-complete \cite{Checkers_MOVES2017}.
In addition to being a popular form of Chess problem,
helpmate problems naturally arise in regular games of Chess, as
FIDE's%
\footnote{The International Chess Federation (FIDE) is the governing
  body of international Chess competition.  In particular, they organize
  the World Chess Championship which defines the world's best Chess
  player.  All top-level Chess competitions (not just FIDE's)
  follow FIDE's rules of Chess \cite{FIDE}.}
``dead-reckoning'' rule says that any position without a helpmate is automatically a draw:
\begin{quote}
``The game is drawn when a position has arisen in which neither player can
checkmate the opponent's king with any series of legal moves.
The game is said to end in a `dead position'.'' \cite[Article 5.2.2]{FIDE}
\end{quote}
In practice, this condition is often checked when a player runs out of time, in which case that player loses if and only if they could ever
possibly be checkmated from the current position
(i.e., the helpmate problem has a solution) \cite[Article 6.9]{FIDE}. 
In Section~\ref{sec:helpmate},
we prove that characterizing such non-dead positions is PSPACE-complete.
Amusingly, this result implies that it is PSPACE-complete to decide whether a given
game position is already a draw (\defn{draw-in-$0$}) for Chess.

\subsection{Chess Problem Definitions}

To formalize the results summarized above, we more carefully define the
objects problems discussed in this paper.

A \emph{Chess position} is a description of an $n\times n$ square grid, where
some squares have a Chess piece (a pawn, rook, knight, bishop, queen, or king
designated either black or white) and a designation of which player (black or
white) plays next.

We follow the standard FIDE rules of Chess \cite{FIDE}, naturally generalized
to larger boards.  In particular, there must be exactly one king of each color;
colors alternate turns; a king cannot be in check after its color's turn; and
rooks, bishops, and queens can move any distance (as also generalized in
\cite{Storer-1983,Fraenkel-Lichtenstein-1981}).

To define reachability for $n \times n$ boards, we define a natural
\emph{starting position} to be a Chess position in which all of the
following conditions hold:
\begin{enumerate}
\item The first two ranks (rows) are filled with white pieces; the last two ranks are filled with black pieces; and the rest of the board is empty.
\item The second and second-to-last ranks contain only pawns.
\item The first and last ranks contain no pawns and exactly one king each,
  and sufficiently many of each of the non-pawn piece types.
  (The exact composition and ordering of these ranks will not affect our
   reduction.)
\end{enumerate}
%

%

Now we can define the two decision problems studied in this paper:

\begin{problem}[\defn{Reachability}]
  Given an $n \times n$ Chess position,
  is it possible to reach that position from a starting position?
\end{problem}

\begin{problem}[\defn{Helpmate}]
  Given an $n \times n$ Chess position,
  is it possible to reach a position in which the black king is checkmated?
\end{problem}

\begin{lemma} \label{lem:PSPACE}
  Both helpmate and reachability are in PSPACE.
\end{lemma}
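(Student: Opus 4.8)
The plan is to place both problems in NPSPACE and then invoke Savitch's theorem (NPSPACE $=$ PSPACE). The first step is to fix a notion of \emph{extended position} that augments the board and the side to move with the $O(1)$ bits recording castling rights and the $O(\log n)$ bits identifying a possible en~passant target square, so that whether a given move is legal depends only on the current extended position. Because each of the $n^2$ squares holds one of a constant number of contents, there are at most $2^{q(n)}$ extended positions for some fixed polynomial $q$, and any one of them is stored in $q(n)$ bits.

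Next I would observe that both problems are reachability questions in the (exponentially large, implicitly described) directed graph $G$ whose vertices are extended positions and whose arcs are legal moves. Helpmate asks whether some checkmate position is reachable from the given position; Reachability asks whether the given position is reachable from a starting position, of which there are at most exponentially many, each describable in polynomial space. The arc relation is decidable in polynomial time, and so are the relevant vertex predicates: ``is this a checkmate position,'' ``is this a valid starting position,'' and ``does this equal the input position.'' (Checkmate, for instance, is the statement that the king to move is in check and each of the $O(n^3)$ candidate moves leaves it in check.)

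The NPSPACE algorithm then guesses a path in $G$ one arc at a time, storing only the current vertex together with a binary step counter; for Reachability it first guesses and verifies a starting vertex. The one point that needs an argument is the length bound: if a target vertex is reachable at all, it is reachable by a path that repeats no vertex, hence of length less than $2^{q(n)}$, so a counter of $q(n)+1$ bits suffices and the whole computation uses polynomial space. If the counter overflows before a target is reached, reject. Savitch's theorem (or, equivalently, unrolling this search into the standard recursive reachability procedure) then yields membership in PSPACE.

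I expect the only mild subtlety to be the bookkeeping that makes move legality a function of the stored state (the castling and en~passant data), together with the remark that automatic draw rules such as the $75$-move rule play no role here: since the two players cooperate, we may take the witnessing move sequence to be a simple path in $G$, which never repeats an extended position and therefore never triggers such a rule. Everything else is routine.
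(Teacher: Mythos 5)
Your proof is correct and takes essentially the same route as the paper's one-line argument: nondeterministically guess a sequence of legal moves in polynomial space, accepting on checkmate (helpmate) or on reaching the input position from a guessed starting position (reachability), and conclude via NPSPACE${}={}$PSPACE; your counter and simple-path length bound just make the NPSPACE bound explicit. One small caveat: your parenthetical claim that a simple path ``never triggers'' the $75$-move rule is not quite right---that rule counts moves without captures or pawn moves, not position repetitions---but the paper (like prior work on generalized Chess) does not track such rules in its notion of position, so this side remark does not affect the lemma.
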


\begin{proof}
  An $n\times n$ Chess position takes only polynomial (in $n$) space to record. A nondeterministic polynomial-space machine can guess a sequence of moves, accepting when it achieves checkmate (for helpmate) or reaches the target position (for reachability); thus both problems are in NPSPACE${}={}$PSPACE.
\end{proof}

We prove that both problems are in fact PSPACE-complete.
Evidence for these problems not being in NP were first given by Shitov's
examples of two legal positions that require exponentially many moves to
go between \cite{shitov2014chess}, using long chains of bishops
locked by pawns.
Our constructions to show PSPACE-hardness take on a similar flavor.

\subsection{Subway Shuffle}

Our reductions are from a one-player puzzle game called \emph{Subway Shuffle},
introduced by Hearn \cite{HD,Hearn} in his 2006 thesis, and shown
PSPACE-complete in 2015 \cite{subwayshuffle}.
Recently, Brunner et al.~\cite{rushhour} introduced a variation called
\defn{oriented Subway Shuffle} and proved it PSPACE-complete, even with
only two colors, a limited vertex set, and a single unoccupied vertex. 

Our reductions to show PSPACE-hardness of Chess-related problems are from a slightly modified version of this restricted form of oriented Subway Shuffle, which we will call ``Subway Shuffle'' for simplicity, defined as follows:

\begin{problem}[\defn{Subway Shuffle}] \label{prob:subway}
  We are given a planar directed graph with edges colored \defn{orange} and \defn{purple}, where each vertex has degree at most three and is incident to at most two edges of each color. Each vertex except one has a \defn{token}, which is also colored orange or purple. One edge is marked as the \defn{target edge}.

  A legal move is to move a token across an edge of the same color, in the direction of the edge, to an empty vertex, and then reverse the direction of the edge.

  The Subway Shuffle decision problem asks whether there is any sequence of legal moves which moves a token across the target edge.
\end{problem}

This definition differs from that in \cite{rushhour} only in the goal condition; in \cite{rushhour}, the goal is to move a specified token to a specified vertex. However, their proof of PSPACE-hardness also works for our definition, where the goal is to move a token across a specified edge; by examining the win gadget in \cite{rushhour}, it is clear that the target token can reach the target vertex exactly when a specific edge is used, so we can set that edge as the target edge.%
\footnote{This is the middle purple edge in the bottom row in Figure 6(a) in \cite{rushhour}. We also remove the target vertex (and the edge incident to it) so there is only one unoccupied vertex.}
Thus we have the following result:

\begin{theorem}[\cite{rushhour}]
  Subway Shuffle is PSPACE-complete.
\end{theorem}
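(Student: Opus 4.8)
The plan is to prove the two directions of PSPACE-completeness separately, with membership essentially a copy of Lemma~\ref{lem:PSPACE} and hardness a careful transfer of the result of \cite{rushhour}. For membership, a Subway Shuffle instance---the planar directed graph together with the edge colors, the token colors, and the target edge---is recorded in polynomial space, and checking legality of a single move (same color as the token, correct direction, empty destination, followed by reversing the edge) takes polynomial time. A nondeterministic machine can therefore guess a legal move sequence one move at a time, storing only the current configuration, and accept the instant a token is moved across the target edge; this places Subway Shuffle in $\mathrm{NPSPACE} = \mathrm{PSPACE}$ by Savitch's theorem.

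For PSPACE-hardness I would reduce from the restricted form of oriented Subway Shuffle proved PSPACE-hard in \cite{rushhour}---two colors, maximum degree three, at most two edges of each color per vertex, a single unoccupied vertex---whose goal is to bring a designated target token to a designated target vertex. Given such an instance, keep the graph, colorings, and token placement unchanged and only reinterpret the goal. The crucial input from \cite{rushhour} is a structural property of their win gadget (Figure 6(a)): there is a distinguished edge $e^\star$---the middle purple edge in the gadget's bottom row---that is traversed by some token in a play if and only if that play can be extended to one delivering the target token to the target vertex. Setting $e^\star$ to be the target edge of the new instance makes the ``traverse the target edge'' question equivalent to the original ``deliver the target token'' question, so the two decision problems have the same answer. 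To satisfy our requirement that exactly one vertex be unoccupied, additionally delete the target vertex together with its single incident edge; since the new decision question concerns only whether $e^\star$ is ever used, and $e^\star$ is a bottleneck strictly preceding that deleted edge on any delivery path, the deletion does not change the answer. Deleting a degree-one vertex and its incident edge preserves planarity, the degree bound, and the per-color incidence bound, so the output is a legitimate instance of Problem~\ref{prob:subway}, completing the polynomial-time reduction.

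The main obstacle is establishing the ``if'' direction of the win-gadget equivalence: that a play in which some token crosses $e^\star$ genuinely witnesses solvability of the original instance, rather than crossing $e^\star$ via some incidental route unrelated to advancing the target token toward the target vertex. The ``only if'' direction---every delivery of the target token must cross $e^\star$---is comparatively easy, since $e^\star$ lies on the unique approach to the target vertex. The ``if'' direction is exactly the content of the win-gadget analysis in \cite{rushhour}, where the gadget was engineered precisely so that using $e^\star$ is tantamount to winning; carrying it out therefore amounts to a careful reading of their construction rather than new combinatorial work, and the rest of the hardness argument is inherited verbatim from \cite{rushhour}.
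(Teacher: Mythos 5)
Your proposal is correct and matches the paper's argument essentially verbatim: the paper likewise inherits hardness from the restricted oriented Subway Shuffle of \cite{rushhour}, observes that the win gadget there is used exactly when the middle purple edge in the bottom row of its Figure 6(a) is traversed, designates that edge as the target edge, and removes the target vertex with its incident edge so that only one vertex is unoccupied. The PSPACE membership argument you add is the same guess-and-check NPSPACE reasoning the paper uses elsewhere (Lemma~\ref{lem:PSPACE}) and is uncontroversial.
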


\section{Helpmate Chess Problems are PSPACE-Complete}
\label{sec:helpmate}

In this section, we prove that Helpmate is PSPACE-complete by reducing from Subway Shuffle.

\begin{theorem}
  Helpmate is PSPACE-complete.
\end{theorem}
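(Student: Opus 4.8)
The plan is to establish PSPACE-hardness by a polynomial-time reduction from Subway Shuffle; combined with Lemma~\ref{lem:PSPACE} this gives completeness. Given a Subway Shuffle instance --- a planar graph of maximum degree three with orange/purple directed edges, tokens on all but one vertex, and a target edge --- I would construct an $n \times n$ Chess position whose only ``interesting'' legal play simulates the puzzle and in which the Black king can be checkmated exactly when the target edge can be traversed. Embed the planar graph in a polynomial-size grid by a standard planar drawing and replace each vertex by a \textbf{vertex gadget} and each edge by an \textbf{edge gadget} (a ``wire''); the unique token-free vertex becomes a single empty square, the \textbf{hole}. All tokens are represented by White pieces, with the two token colors encoded by two distinct piece types together with matching edge-gadget geometry, so that an orange token can traverse only orange edges and vice versa. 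Because every mobile simulation piece is White, the alternation of turns in Chess does not obstruct the simulation: Black is given a single harmless piece that can shuffle back and forth between two squares, so the cooperating side can always ``waste a tempo'' to put the move on whichever color it needs (and symmetrically White has such a waiting resource). The Black king is placed from the outset in a tiny cell, hemmed in by its own immobile pawns and the edge of the board, so that it has no moves; the only way to deliver check to it is along one particular line, which stays blocked until the target edge is used.

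Next, the gadgets. An edge gadget is a short corridor enclosed by mutually-blocked pawn walls, shaped to admit only the piece type of its color; besides possibly a token, it contains a small auxiliary mechanism whose state records the current orientation of the edge. Traversing the edge from $u$ to $v$ is realized by a short forced micro-sequence of legal moves --- flipping the orientation bit, then sliding the token from the $u$-end square to the (empty) $v$-end square --- which is possible only when the bit reads ``$u\to v$'' and which leaves it reading ``$v\to u$''; the intermediate configurations of this micro-sequence are ``dead'', meaning that from them no checkmate is reachable except by completing or reverting the traversal. A vertex gadget is the junction where the (at most three) incident corridors meet at the shared vertex square; it holds that vertex's token or is the hole, and it enforces that at most one token occupies the vertex. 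The \textbf{target-edge gadget} is an ordinary edge gadget with one extra feature: performing its traversal also vacates a square on the unique line that attacks the Black king's cell, so that afterwards a White rook or queen can move onto that line and deliver checkmate (with Black to move and no escape, block, or capture available). Throughout, all pawns are arranged to be permanently immobile (hence no promotions and no en passant), the White king is likewise walled off, and all token tracks are routed so that no token is ever on a line with the Black king; thus no stray checks are ever possible.

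For correctness, the forward direction is a direct simulation: given a solving sequence of Subway Shuffle moves, replay it move by move --- each token move becomes its micro-sequence, with Black (and, as needed, White) interleaving waiting moves to fix parity --- and when the target edge is finally traversed, White plays the mating move, so a checkmated Black king is reachable. The reverse direction is where the real work lies: one must show that \emph{every} legal move available from the constructed position, at every reachable configuration, is one of (a) a waiting move, (b) a step of some edge gadget's traversal micro-sequence, or (c) the completion of such a micro-sequence (a genuine token move), and that none of these ever places the Black king in check until the target line has been opened. Granting this, any play reaching checkmate must at some point open that line, which requires completing the target-edge gadget's traversal, i.e.\ moving a token across the target edge; projecting such a Chess play onto its completed token moves yields a valid sequence of Subway Shuffle moves that uses the target edge. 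Hence the constructed position admits a helpmate iff the Subway Shuffle instance is solvable; the construction is clearly computable in polynomial time, so Helpmate is PSPACE-hard, and with Lemma~\ref{lem:PSPACE} it is PSPACE-complete.

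The main obstacle is twofold. First, faithfully encoding a \emph{directed} edge with a \emph{per-use} reversal out of pawn-locked sliding pieces: symmetric pieces want to move both ways along a corridor, so the orientation mechanism and the multi-move handshake it forces must be designed so that (i) only the correct direction is ever enabled, (ii) the bit genuinely flips, and (iii) the partial states introduced cannot be exploited to reach a checkmate on the side. Second, and more delicate precisely because cooperative play removes any adversary that could punish deviations, is the global ``inertness'' argument: one must verify, by a finite case analysis over the gadget types and their interfaces, that no combination of legal moves other than the intended simulation can ever produce a check --- let alone a mate --- on the Black king. I expect the bulk of the proof to be this verification that the gadgets cannot be misused.
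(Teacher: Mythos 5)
Your overall strategy is the same as the paper's: membership via Lemma~\ref{lem:PSPACE}, hardness by reducing from Subway Shuffle, with all mobile simulation pieces White, a trapped Black piece shuffling back and forth to pass, pawn-locked walls making everything else immobile, and a Black king in a sealed cell that can only be mated once the target edge is traversed. But as a proof there is a genuine gap: the two things you yourself flag as ``the main obstacle'' --- a mechanism that encodes a \emph{directed} edge whose orientation flips on each use, built only from locked sliding pieces, and the global case analysis showing no unintended moves ever help --- are exactly the content of the hardness proof, and your proposal does not supply them. Saying the edge gadget ``contains a small auxiliary mechanism whose state records the current orientation'' is a specification, not a construction; it is far from clear such a bit-flipping widget exists inside a pawn wall, and the correctness of the whole reduction hinges on it. Likewise, encoding the two token colors by ``two distinct piece types together with matching edge-gadget geometry'' is asserted but not designed.

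For comparison, the paper avoids your orientation-bit problem entirely: edges carry no orientation state. An edge is a full line of bishops that can only shift one square toward whichever endpoint's vertex square is currently empty, so traversal direction and token occupancy are recorded at the \emph{vertex} gadgets --- the vertex square holds a knight (orange token), a rook (purple token), or is the unique empty square, and which of two knights threatening that square is present encodes whether the orange edges point in or out. A preliminary lemma (Lemma~\ref{color changing lemma}) first normalizes the Subway Shuffle instance so every degree-3 vertex has two orange and one purple edge, so only one vertex gadget is needed; the win gadget is just a vertex gadget with one pawn replaced by the boxed-in Black king, and the inertness check reduces to a short case analysis (essentially one stray pawn push that immediately deadlocks). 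So your plan is compatible with the paper's, but without concrete gadgets and the accompanying verification it does not yet establish PSPACE-hardness.
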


The structure of the reduction is to use a line of pieces of one type to represent and edge in the Subway Shuffle graph, where using the edge involves moving every piece in the line one space. A vertex is represented by a square where pieces from three different edges can move to. The two colors of Subway Shuffle are represented by which piece type is present in the vertex. All of the moving pieces involved in the reduction are white; black will be given a gadget to pass their turn with. In many of the figures, we label the relevant pieces that can move in red; all of the red pieces are white in the actual Chess position.

In order to make sure that players cannot make moves outside of the reduction, all of the edge and vertex gadgets are walled in with walls of bishops and pawns that are completely stuck.

\subsection{Two-Orange One-Purple Subway Shuffle}

First, we show how to modify Subway Shuffle slightly to make our reduction simpler.
In Subway Shuffle, some vertices have two orange edges incident while others have two purple edges incident. Rather than trying to build separate gadgets for each of this cases, we use Lemma~\ref{color changing lemma} to have every vertex have two orange edges and one purple edge. This way we only need to build gadgets for one type of vertex. 

\begin{lemma}
  \label{color changing lemma}
  Subway Shuffle is PSPACE-complete even when every degree three vertex has exactly two orange and one purple edge incident.
\end{lemma}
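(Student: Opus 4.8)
The plan is to give a gadget that converts a "two-purple one-orange" vertex into an equivalent configuration in which every degree-three vertex has exactly two orange edges and one purple edge, while degree-one and degree-two vertices are left untouched (they cause no trouble, since the restriction is only on degree-three vertices). So the whole reduction reduces to exhibiting a local color-changing gadget: a small planar sub-graph that simulates the behavior of a single purple edge using orange edges, so that a vertex which originally had two purple and one orange incident edge can be rewired to have two orange and one purple.

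First I would recall precisely what behavior a colored edge must exhibit: an edge of color $c$ joining vertices $u$ and $v$, currently oriented $u\to v$, permits exactly one move when $v$ is empty — slide the $c$-colored token from $u$ to $v$ and flip the edge to $v\to u$ — and nothing else. So a "color-changing gadget" replacing a purple edge $uv$ must be a planar graph, using only orange edges internally, with two external connection vertices playing the roles of $u$ and $v$, such that: (i) it is inert (admits no internal moves) except when one of its endpoints is empty and the adjacent token has the correct color; (ii) when $u$ is empty it can, through a forced or at least reversible cascade of internal orange moves, transfer "emptiness" to $v$ (equivalently move a purple-equivalent token from $v$'s side to $u$'s side) and flip its internal state; and (iii) this is the only net effect — no partial execution can leak a token out the wrong side or create an irreversible change. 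The natural construction is a short chain of auxiliary vertices carrying orange tokens, wired so that the only way a token enters the chain is from a designated end and the only way the chain can relax is to pass the hole all the way through. I would draw this as a path $u - a_1 - a_2 - \cdots - a_k - v$ with alternating edge orientations and orange tokens on the $a_i$, chosen so that locally it is frozen unless an endpoint is vacated.

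The key steps, in order: (1) state the gadget as a planar graph fragment with two ports; (2) prove the \emph{simulation lemma} — in any reachable global configuration, the gadget's internal tokens occupy one of exactly two "stable" patterns, corresponding to the two orientations of the purple edge it replaces, and the reachable transitions between stable patterns are exactly the two directions of the purple-edge move, with every intermediate internal configuration forced to complete the cascade (no new deadlocks, no extra freedom); (3) check the degree and color bookkeeping — each internal vertex has degree at most three and at most two edges of each color, the ports keep their external orange edge, and after substitution the formerly-degree-three vertex now sees two orange edges (its original orange edge plus the gadget's orange port edge) and one purple edge (whichever of its two original purple edges we chose to keep, if any; if it had two purple and one orange we replace exactly one purple edge by the gadget); (4) observe that the transformation is local and polynomial, applied independently at each offending vertex, so the target-edge reachability question is preserved, and PSPACE-hardness transfers. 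The token-count invariant (one empty vertex) is maintained because the gadget contributes equally many vertices and tokens.

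The main obstacle I expect is step (2): proving that the gadget cannot be "half-used." Subway Shuffle moves are individually reversible, but the danger is that pushing the hole partway into the gadget and then doing moves on the \emph{other} port could strand the gadget in a configuration that is stable-looking but does not correspond to either orientation of the purple edge, or — worse — that becomes permanently stuck and thus changes the reachability relation. Ruling this out requires a careful case analysis of the small gadget's state space (a finite check, since the gadget is $O(1)$ size), showing that from every internal configuration the only moves available either continue the intended cascade or undo it, and that the two ports cannot be "active" simultaneously in a way that interleaves badly. Once that finite verification is done, everything else is routine bookkeeping, and the lemma follows by applying the substitution at every vertex that violates the two-orange-one-purple condition.
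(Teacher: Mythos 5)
Your plan has a genuine flaw at its core: the gadget you propose cannot exist in the form you describe. In Subway Shuffle a token only ever moves along an edge of \emph{its own} color, and tokens never change color. So if you replace a purple edge $uv$ by a path $u - a_1 - \cdots - a_k - v$ of orange edges with orange tokens on the $a_i$, the purple token sitting at $u$ can never enter the chain at all: pushing the hole from $v$ into the chain just shifts orange tokens (depositing an orange token at $v$ where the original dynamics require a purple one to arrive), and the cascade stalls at $a_1$ because the purple token at $u$ cannot cross the orange port edge. Your own requirement (ii) --- transferring emptiness to $u$ and moving a ``purple-equivalent'' token across --- is therefore unachievable by any chain that a single token physically traverses. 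The same tension shows up in your bookkeeping step (3): to fix the color count at the offending vertex, the port edge incident to it must be orange, but then the purple traffic that vertex is supposed to handle on that side can no longer reach it. The missing idea is that a color change must be realized by a \emph{hand-off}: a small replacement of the vertex itself (not of an edge) in which occupancy is passed between tokens of different colors, so that the token occupying the vertex is now orange while the neighboring edges keep colors consistent with their far endpoints. This is what the paper does: it directly transforms each vertex that has two purple incident edges into an equivalent local configuration with two orange incident edges.

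Two further points in the paper's argument that your proposal does not anticipate, and that any correct version would need: first, vertices holding a purple token with no outgoing purple edge need not be transformed at all, since they can never move and may be ignored; second, the transformed vertex is not a perfectly exact simulation --- in one case the gadget admits one extra local move (a purple token can exit twice) --- and the proof must argue this is harmless using the invariant that only one vertex is ever empty. Your instinct in step (2), that the finite state space of a constant-size gadget must be checked for half-used or deadlocked configurations, is the right kind of care; but it is applied to a gadget whose basic mechanism cannot carry out the intended simulation, so the verification you describe would in fact fail.
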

\begin{proof}
  Given an instance of Subway Shuffle, every vertex with two purple incident edges can be replaced with one with two orange incident edges as shown in Figure~\ref{fig:colorchanging}. Note that while we need to transform vertices with one or two purple outgoing edges, we don't need to worry about vertices with zero purple outgoing edges. This is because a purple vertex with zero purple outgoing edges can never move, so the entire vertex is stuck and can safely be ignored. It is easy to check that the set of legal moves is the almost the same in every configuration. The only difference is that in Figure~\ref{fig:colorchanging}(d), the purple token can leave the vertex twice through each of the two outgoing edges; however the second purple token that leaves doesn't allow any further moves except moving the purple token back into place. With the assumption that only one vertex is ever empty, this situation is never useful, so this replacement perfectly simulates the original vertex.
  \end{proof}

\begin{figure}
  \centering
  \begin{subfigure}{.3\linewidth}
    \centering
    \includegraphics[width=\linewidth]{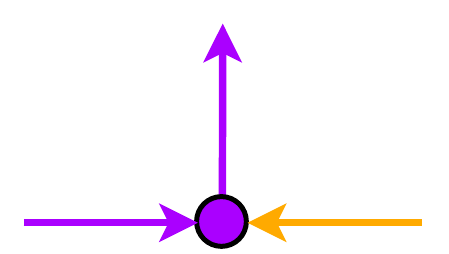}
    \caption{}
  \end{subfigure}%
  \hfil
  \begin{subfigure}{.3\linewidth}
    \centering
    \includegraphics[width=\linewidth]{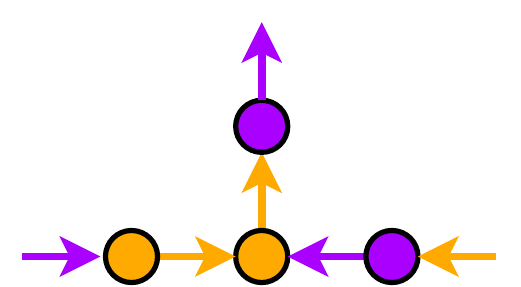}
    \caption{}
  \end{subfigure}

  \begin{subfigure}{.3\linewidth}
    \centering
    \includegraphics[width=\linewidth]{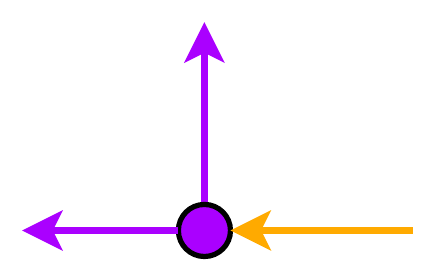}
    \caption{}
  \end{subfigure}%
  \hfil
  \begin{subfigure}{.3\linewidth}
    \centering
    \includegraphics[width=\linewidth]{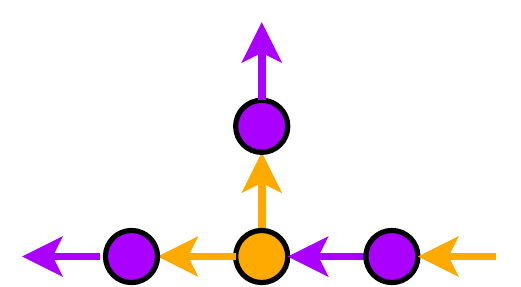}
    \caption{}
  \end{subfigure}
    \caption{The color changing gadget. (a) and (b) show the transformation for vertices with one coming purple edge, and (c) and (d) show the transformation for vertices with no incoming purple edges. In both cases, the vertex behaves identically after the change, and using this technique we can give every degree-3 vertex two orange edges.}

\label{fig:colorchanging}
\end{figure}

\begin{figure}
  \centering
  \includegraphics[width=.8\linewidth]{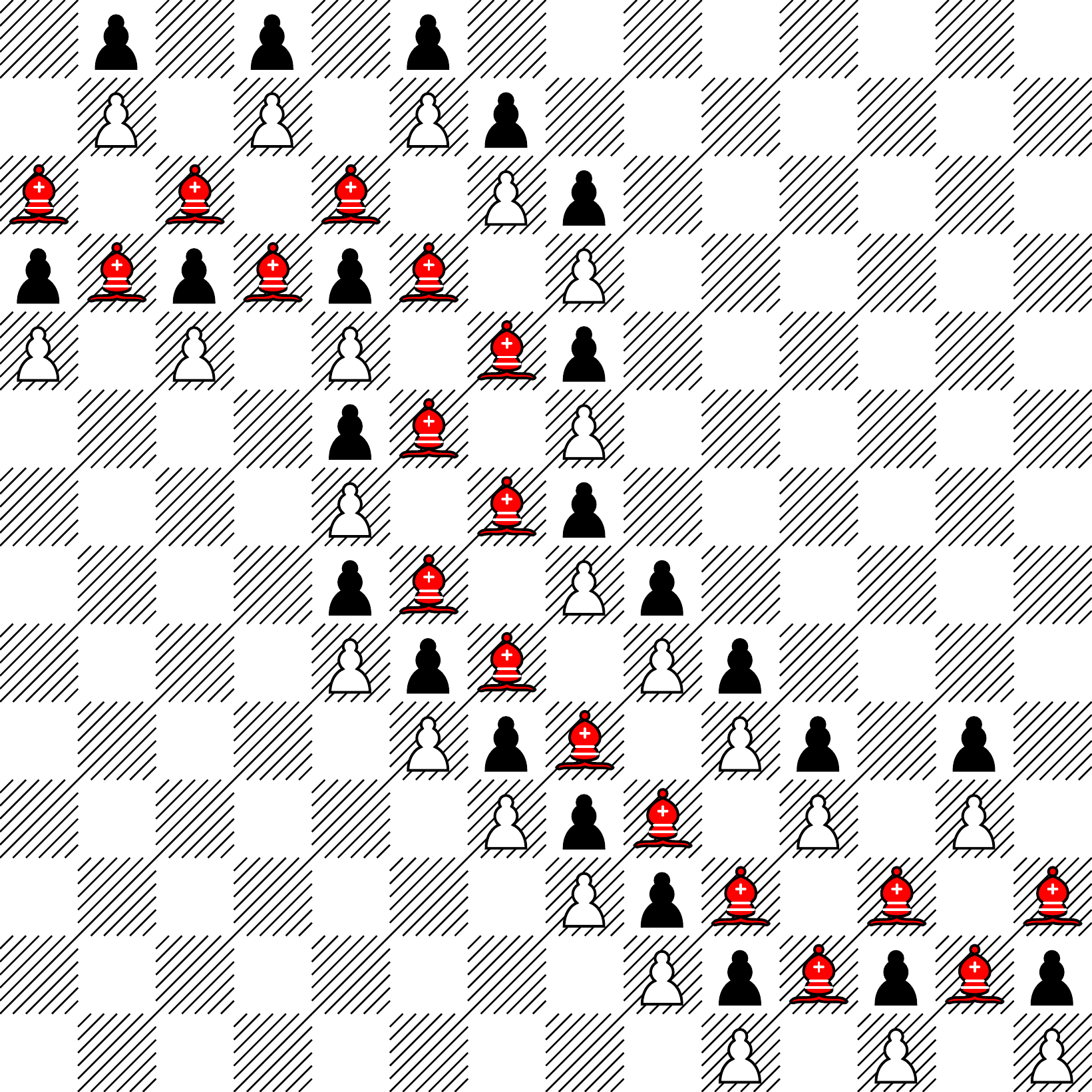}
  \caption{Our edge gadget. Since pawns care about their orientation, the gadget looks slightly different when the edge runs vertically compared to horizontally. This figure shows what both look like and how it can turn. Red represents movable white pieces.}
  \label{checkmate edge}
\end{figure}

\subsection{Gadgets}

We start with the \defn{edge gadget}. To represent an edge, we simply use a line of bishops, shown in Figure~\ref{checkmate edge}. To move a token along the edge, move all of the bishops one space in that direction. The net effect will be a bishop entering one end and another bishop leaving the other end, representing a token moving.

\begin{figure}
  \centering
  \begin{subfigure}{.47\linewidth}
    \centering
    \includegraphics[width=\linewidth]{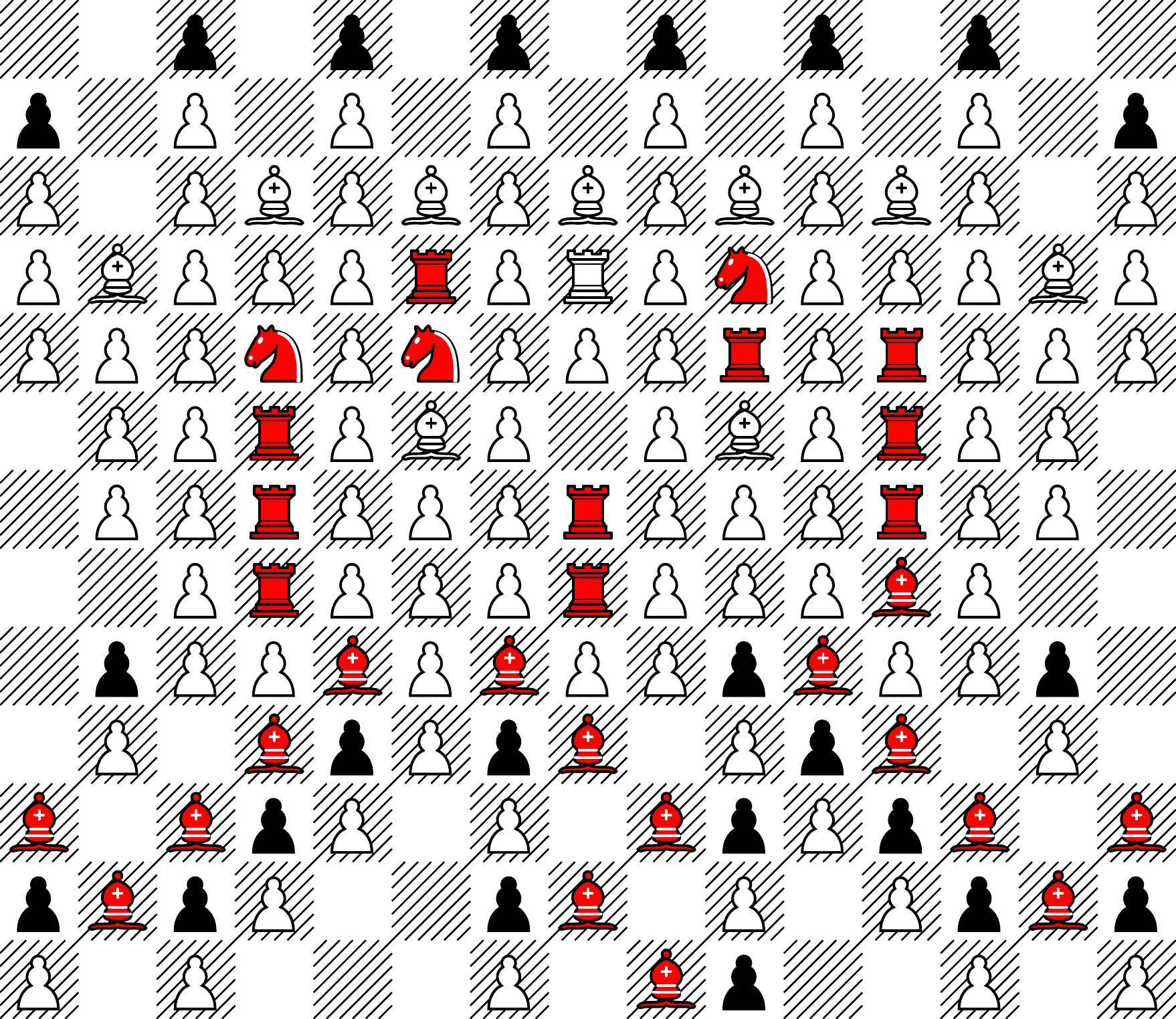}

    \medskip

    \includegraphics[scale=0.75]{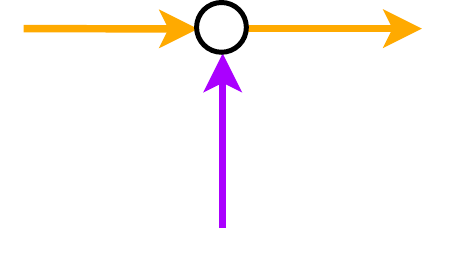}
    \caption{An empty vertex with one orange edge pointing in and one pointing out.}
  \end{subfigure}%
  \hfill
  \begin{subfigure}{.47\linewidth}
    \centering
    \includegraphics[width=\linewidth]{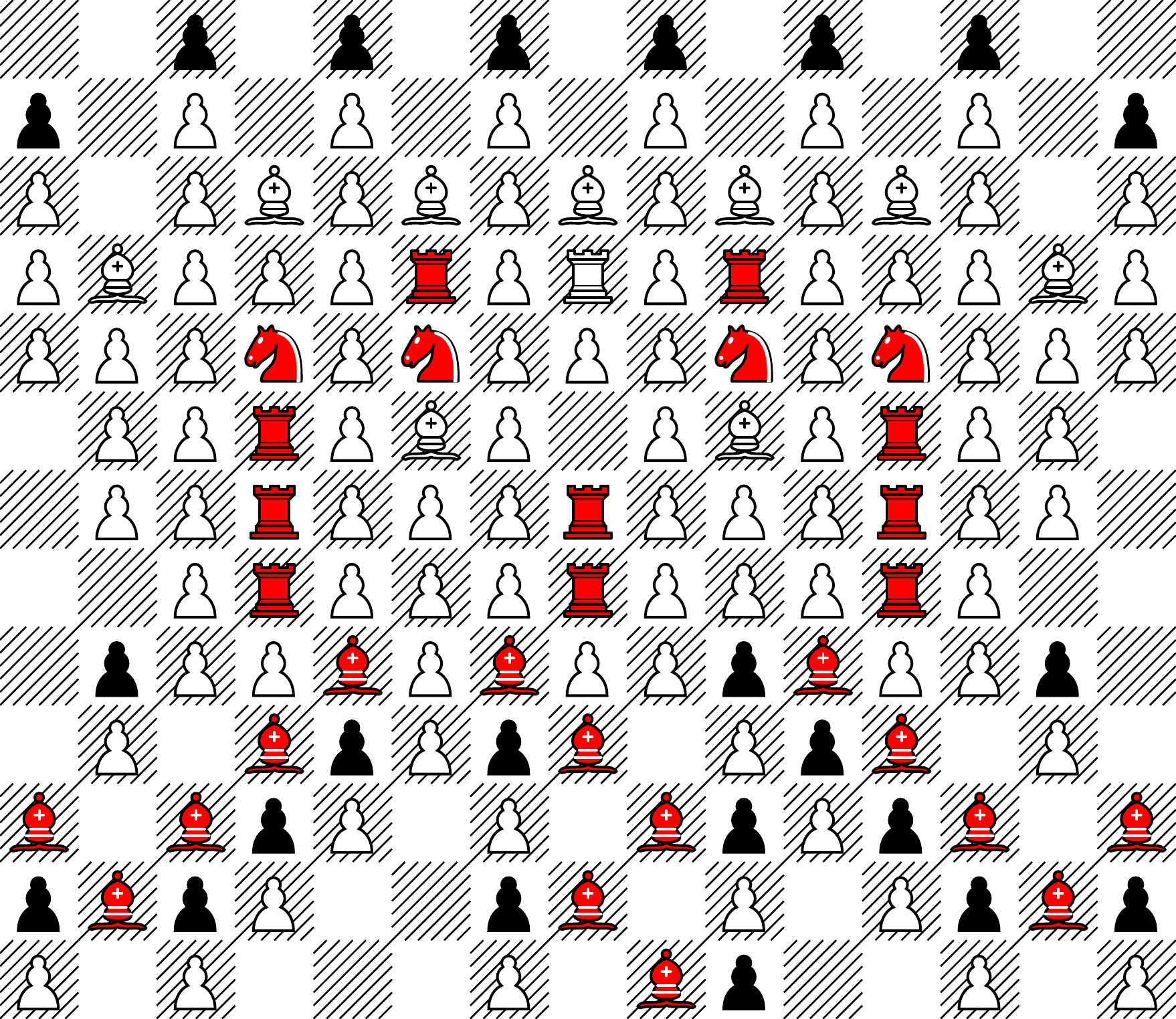}

    \medskip

    \includegraphics[scale=0.75]{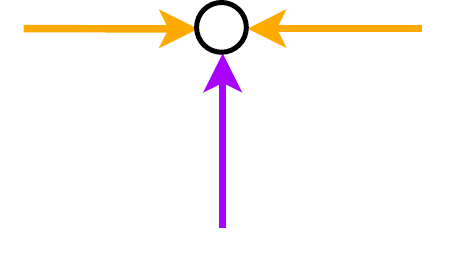}
    \caption{An empty vertex with both orange edges pointing in.}
  \end{subfigure}
  \caption{The vertex gadget. The two edges coming out of the sides are orange edges, and the middle edge coming out of the bottom is purple. Which of two red knights which threaten the center empty space are present determines whether the orange edges are pointing in or out.}
  \label{checkmate vertex}
\end{figure}

Now we move on to the \defn{vertex gadget}. There are two cases for a Subway Shuffle vertex: a vertex which can have two edges of one color both pointing into the vertex when it is empty, and a vertex which has one edge pointing in and one pointing out of the same color when it is empty. Note that a vertex which has all of the edges of a color pointing out does not make sense because a token of that color could never reach the vertex, so those edges are provably unusable. We implement both of these with the same vertex gadget, shown in Figure~\ref{checkmate vertex}. This gadget has three edges coming out of it from the left, right, and bottom. The left and right edges are orange, and the bottom edge is purple. Which type of vertex the gadget represents depends on which red knights are present in the middle. The empty square in the middle is the vertex square. When it contains a knight, it represents a vertex occupied with an orange token, and when it contains a rook, it represents a vertex occupied with a purple token. To use the gadget, white moves all of the red pieces one step away from the vertex along one of the edges until the vertex square becomes empty. This represents a token leaving the vertex along that edge. Then white moves one of the red pieces that can move into the vertex square and continues moving all of the pieces along that path of red pieces; this represents moving a token into this vertex along that edge.

We use the argument about color changing from Lemma~\ref{color changing lemma} to allow all three edges leaving the vertex to be bishop lines. The transition from the rooks in the gadget to the bishop edges is essentially this color-changing. All of the bishops are on dark squares, and our edges can be routed to connect arbitrary squares of same color, so we will not need to worry about parity issues with connecting different vertices.

\begin{figure}
  \centering
  \includegraphics[width=.8\linewidth]{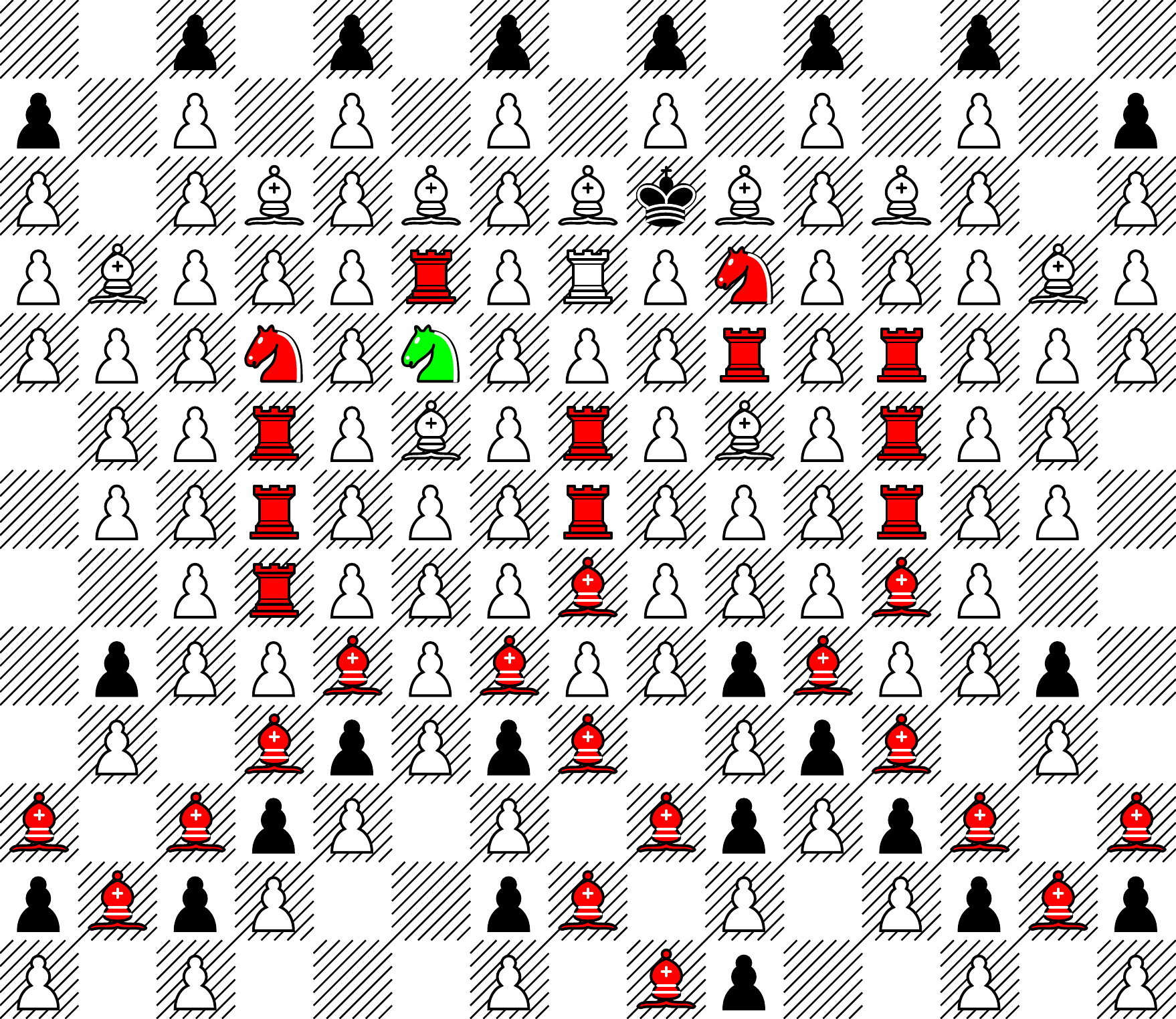}
  \caption{The win gadget. The black king is completely stuck; if it gets checked it will be immediately checkmated. The white knight highlighted in green is the only piece ever capable of accomplishing this. In order to do so, it must first move to the vertex, and then from there move to the right edge.}
  \label{checkmate win}
\end{figure}

Now we have to implement the Subway Shuffle target edge. This means making a \defn{win gadget} which checks whether a particular edge is used. In order for an edge to be used, a piece must leave the vertex at the tail of the edge to move along that edge. Our win gadget is a modified version of the vertex gadget which allows white to checkmate if they can get a knight (representing an orange token) to leave the vertex along a specified edge. Our win gadget is depicted in Figure~\ref{checkmate win}. Note that it is identical to the vertex gadget except for the replacement of one of white's pawns with a black king.

\begin{figure}
  \centering
  \includegraphics[width=.2\linewidth]{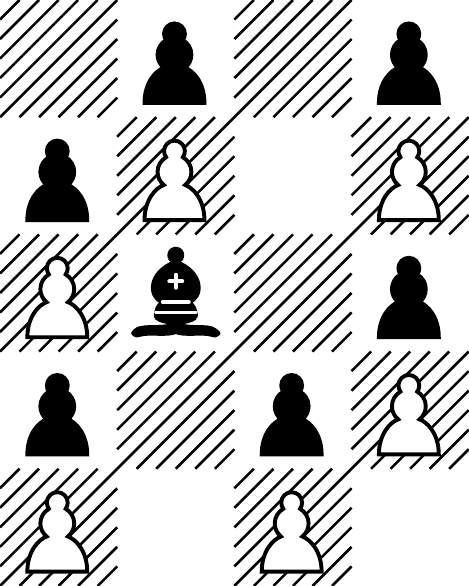}
  \caption{Do-nothing gadget, which allows black to pass forever.}
  \label{checkmate nothing}
\end{figure}

Lastly, we have a \defn{do-nothing gadget}. The entire puzzle is solved by white; all of the pieces that can move in the gadgets are white's and the helpmate in question is white trying to checkmate black. A legal Chess game, however, must have alternating moves by each side. Thus, we need to give black something to do. The gadget in Figure~\ref{checkmate nothing} accomplishes this, by giving black a trapped bishop that they can (and must) move back and forth in between white's moves.

It is worth noting that the positions that result from this reduction are reachable from a starting position, provided we make the board size a polynomial in the size of the Subway Shuffle instance large enough to have enough pieces and pawns to make the gadgets. Any extra pieces can capture each other prior to beginning to construct the position. We can also lock the white king away in a cage similar to the do-nothing gadget.

\subsection{Correctness}

Now we show that the only way white can ever checkmate black is by solving the Subway Shuffle problem and using the gadgets as they are intended to be used.

For the edge gadget, it is easy to check that no piece can move except the red bishops can move one space along the edge when it is in use.

For the vertex gadget, there is one other move white can try, but it does not do anything productive. White can try moving one of the pawns below the rook columns up one space when the rook above it moves up. This results in an immediately stuck position, so it is never useful for white to do. There are no other moves white can legally make outside of the reduction.

\section{Reachability Retrograde Chess Problems are PSPACE-Complete}
\label{sec:reachability}


We reduce from the same problem, max-degree-3 two-color oriented Subway Shuffle, as in the previous section. As in the previous section, the basic structure will have white solving an instance of Subway Shuffle while black effectively passes their turn. But this time, rather than making moves, white will ``undo'' moves, which allows for pawns to move backwards or pieces to be uncaptured, among other things. If white succeeds, the win gadget will allow a piece to escape from the walls of the reduction. Once this hole appears, it will let more pieces forming the walls of the gadgets to start leaving, eventually unravelling all of the gadgets. At this point once the pieces are spread out, it is easy for the players to find a sequence of moves that could get there from the starting position.

Before we describe the gadgets, we will first make some observations about how moves work in retrograde puzzles. Instead of thinking about moves that can be made from a position, we will think about moves that could have just been done; we will call these \emph{unmoves}. All Chess pieces other than pawns unmove the same way that they move. Pawns are different, and all captures are different as well. A piece is never captured in an unmove; to undo a capture, a piece will unmove and the captured piece appears in its place. This means that, unlike in the checkmate reduction before, we will not have to worry about pieces being capturable, so the color of non-pawn pieces in the reduction is irrelevant.

Another important distinction is that pawns do not need a piece to be able to uncapture, so any pawn can always move diagonally backward to uncapture a piece unless the space it would unmove to is occupied. Due to Corollary~\ref{empty file lemma}, this will make walling our gadgets much harder than before since every Chess piece can unmove to some square to its left and some square to its right. This means that we cannot have any isolated gadgets in the middle of the board; in order for any block of pieces to be stuck, the block must extend to both the left and right edges of the board. This results in needing an additional gadget, a terminator that we attach to the ends of the construction which anchors everything to the edge of the board.

\begin{lemma}
  \label{empty adjacent lemma}
  If the five nearest spaces in either file (column) immediately adjacent to a piece are empty, and the piece is not in the first two or last two ranks, then that piece can unmove into that file, leaving an empty space where it came from.
  \end{lemma}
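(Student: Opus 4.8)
The plan is to prove the lemma by a case analysis on the type of the piece, in each case exhibiting an explicit unmove---or, for pawns, a short sequence of two unmoves---that relocates the piece into the chosen adjacent file while vacating its original square, using only the five empty squares guaranteed by the hypothesis. Writing $r$ for the piece's rank and $f'$ for the chosen adjacent file, these five squares are $(f',r-2),\ldots,(f',r+2)$, and they all lie on the board precisely because the piece avoids the first two and last two ranks, so $3 \le r \le n-2$.

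The non-pawn pieces are immediate. A king, rook, or queen has a legal one-square horizontal move between adjacent files, so unmoving it horizontally onto the (empty) square $(f',r)$ does the job, using just one of the five squares. A bishop changes files only on a diagonal, and to land in $f'$ it must move exactly one square diagonally, reaching $(f',r-1)$ or $(f',r+1)$, both among the five; either works. A knight reaching an adjacent file lands two ranks away, at $(f',r-2)$ or $(f',r+2)$, again among the five. In all of these cases the piece simply leaves its square, so that square becomes empty.

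The pawn is the crux. Since a pawn enters a new file only by capturing, we undo a capture: the pawn unmoves diagonally backward (toward its home side) into $f'$, landing at $(f',r-1)$ or $(f',r+1)$, and an uncaptured piece appears on the pawn's former square $(f,r)$. We are free to take this piece to be a rook (or queen) of the opposite color. Its square is now occupied, so we perform a second unmove: that rook unmoves one square horizontally onto $(f',r)$, which is still empty because the pawn occupies a different square of $f'$. After these two unmoves the pawn sits in the adjacent file and its original square is empty, having consumed two of the five empty squares.

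The geometry above is routine; the delicate part is legality of the reconstructed predecessor positions---exactly one king of each color (automatic, since we only ever introduce a rook or queen), the correct side to move, and no king left illegally in check. Undoing a white (resp.\ black) move is available only when it is currently Black's (resp.\ White's) turn, and the uncapture together with the subsequent rook move must not leave a king in check along the reconstructed history. Both points are easy in the reduction itself, since the kings are caged far from the active region and turn parity is controlled; so I would either state the lemma with the tacit assumption that the piece's color matches the side that just moved, or observe that first undoing some available opposite-color move (which this very lemma supplies) repairs the parity. Verifying that the pawn's double-unmove can always be completed legally---in particular that the uncaptured rook can be placed and then immediately moved without creating an illegal check---is the main obstacle, and is where I expect to spend the most care.
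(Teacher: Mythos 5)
Your proposal is correct and follows essentially the same route as the paper: non-pawn pieces unmove directly into the adjacent file, while a pawn uncaptures diagonally backward and the uncaptured non-pawn piece then unmoves into the file itself, vacating the original square. The turn-parity and check-legality caveats you flag are real but are handled (indeed, largely glossed over) at the level of the overall construction in the paper, e.g.\ by the kings placed in their home squares guaranteeing both sides always have unmoves, not inside this lemma.
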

  \begin{proof}
    We simply look at each piece and note that every Chess piece can unmove into a square in the immediately adjacent file. For every non-pawn piece, it simply unmoves there and leaves an empty space immediately. For a pawn, it must uncapture to do this, which it can do because it's not in the first two or last two ranks. It can uncapture a non-pawn piece, and that piece can then unmove into the empty file immediately, leaving a hole.
  \end{proof}

  \begin{corollary}
  \label{empty file lemma}
    If a region of the board which does not include the first two or last two ranks has at least one piece and has an empty file adjacent to it, then a piece can unmove (possibly with multiple unmoves) to escape the region.
  \end{corollary}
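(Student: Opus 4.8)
The plan is to derive the corollary from a single application of Lemma~\ref{empty adjacent lemma} to a piece on the side of the region that faces the empty file, rather than by any new argument.

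First I would pin down which piece to unmove. By hypothesis some file (column) $F$ is entirely empty and is adjacent to the region $R$, which I read as: $F$ has a square horizontally next to a square of $R$ that is occupied by a piece (if there were empty columns between $F$ and $R$, one of those would be the relevant adjacent empty file instead). Let $p$ be such a piece, sitting in the column immediately beside $F$. Since $R$ avoids the first two and last two ranks, so does $p$, so the rank hypothesis of Lemma~\ref{empty adjacent lemma} holds for $p$. The emptiness hypothesis is immediate: every square of $F$ is empty, so in particular the five squares of $F$ nearest $p$ are empty. Lemma~\ref{empty adjacent lemma} then lets $p$ unmove into $F$ — one unmove if $p$ is not a pawn, two unmoves (an uncapture, then the uncaptured piece stepping out) if $p$ is a pawn — and since $F$ contains no piece of $R$, the piece that lands in $F$ has left $R$. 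That is exactly the claimed conclusion, with the ``possibly with multiple unmoves'' accounting for the pawn case.

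I do not anticipate a genuine obstacle; the corollary is essentially a convenient repackaging of the lemma, and the only point requiring a little care is the pawn case, where the uncapture momentarily introduces a fresh piece inside $R$ at $p$'s former square before both it and $p$ clear out — but this is already absorbed into the proof of Lemma~\ref{empty adjacent lemma}, so nothing new is needed. (If one preferred, the lemma could instead be iterated to drive an empty square through $R$, but a single step already ejects one piece, which is all the corollary asks for.)
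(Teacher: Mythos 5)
Your argument has a genuine gap at the very first step: the existence of the piece $p$ ``sitting in the column immediately beside $F$'' does not follow from the hypotheses. The corollary only assumes that the \emph{region} has at least one piece somewhere and that an empty file is adjacent to the \emph{region}; the columns of the region nearest that file may well contain no pieces at all, with the pieces sitting several columns deeper inside. Your parenthetical repair --- re-designating a nearer empty column as ``the'' adjacent empty file --- does not close this, for two reasons: such intermediate columns may lie partly inside the region (so they are not ``between $F$ and $R$''), and even when they contain no pieces of the region they need not be empty as full board files, since squares of those columns lying outside the region (above or below it) can be occupied. This situation is not a corner case in the paper's application: as the construction unravels, pieces escape one at a time, so the remaining pieces are typically far from the empty file bordering the region, which is exactly why the statement says ``possibly with multiple unmoves.''

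The paper's proof avoids this by choosing the extreme piece on the relevant side (WLOG the empty file is on the left; take the \emph{leftmost} piece of the region), so that the file immediately adjacent to that piece is guaranteed to be free of region pieces, applying Lemma~\ref{empty adjacent lemma} there, and then letting the piece \emph{continue} unmoving leftward through the empty columns until it exits the region. So the iteration you dismissed as optional (``a single step already ejects one piece'') is in fact the essential part of the argument; your single application of the lemma only ejects a piece in the special case that some piece of the region happens to abut the empty file. Replacing ``a piece next to $F$'' by ``the piece of the region closest to $F$'' and adding the walk-out phase recovers the paper's proof.
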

  \begin{proof}
    Without loss of generality, let the empty file be on the left. Consider the leftmost piece in the region. Then the conditions of Lemma~\ref{empty adjacent lemma} are satisfied, so the piece can unmove into that file. From here the piece can continue unmoving until it leaves the region.
  \end{proof}

\begin{figure}
  \centering
  \includegraphics[width=.5\linewidth]{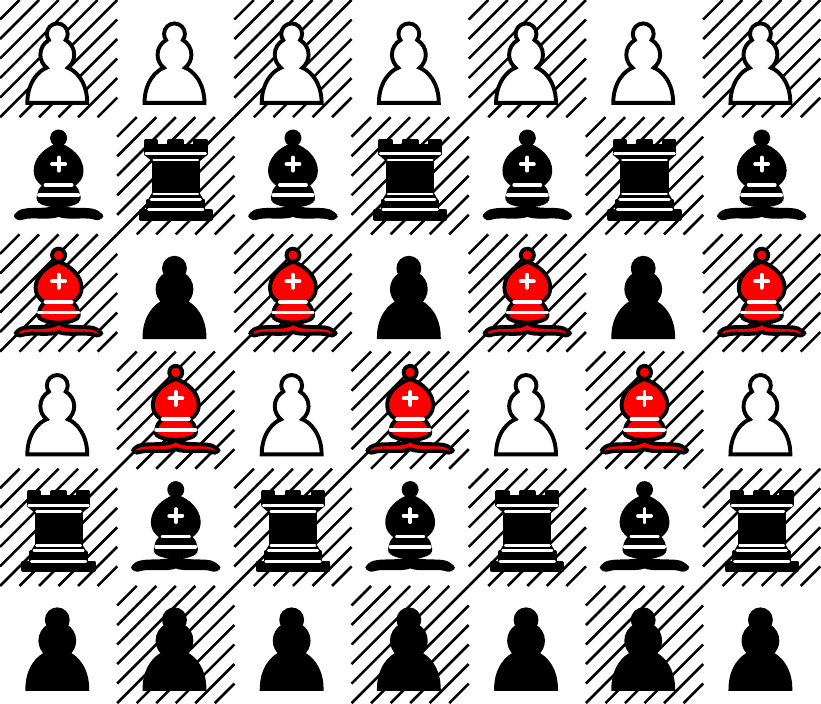}
  \caption{The edge gadget.}
  \label{retro edge}
  \end{figure}

\subsection{Gadgets}

Now we describe the gadgets in our reduction.

First is the \defn{edge gadget} shown in Figure~\ref{retro edge}. Like in the previous section, this gadget uses a line of bishops each of which move one space to represent the movement of a token. To keep the bishops locked in, we use a repeating pattern of pawns, rooks, and bishops across the top and bottom of the edge. Because pawns care about the orientation of the board, we cannot actually make vertical edges. Instead, we use the \defn{turn} and \defn{shift gadgets} shown in Figure~\ref{retro turn shift}; if you wiggle an edge back and forth with turns and shifts you can make it travel vertically up the board.

\begin{figure}
  \centering
  \begin{subfigure}{.32\linewidth}
    \centering
    \includegraphics[width=\linewidth]{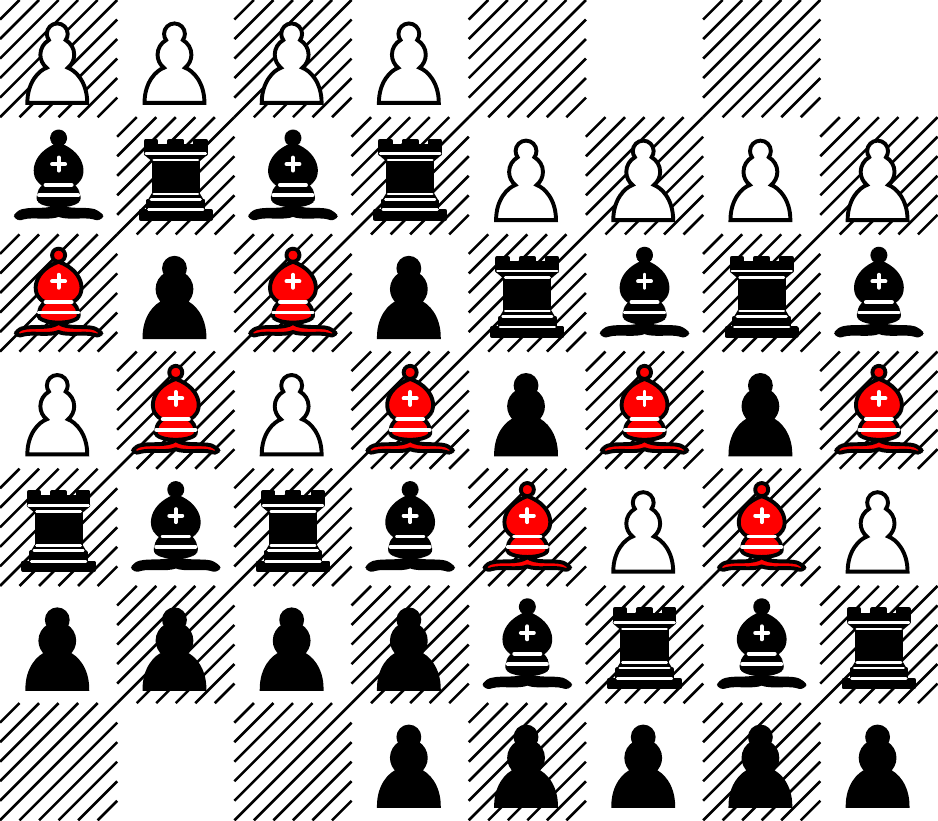}
    \caption{A shifted edge.}
  \end{subfigure}%
  \hfill
  \begin{subfigure}{.62\linewidth}
    \centering
    \includegraphics[width=\linewidth]{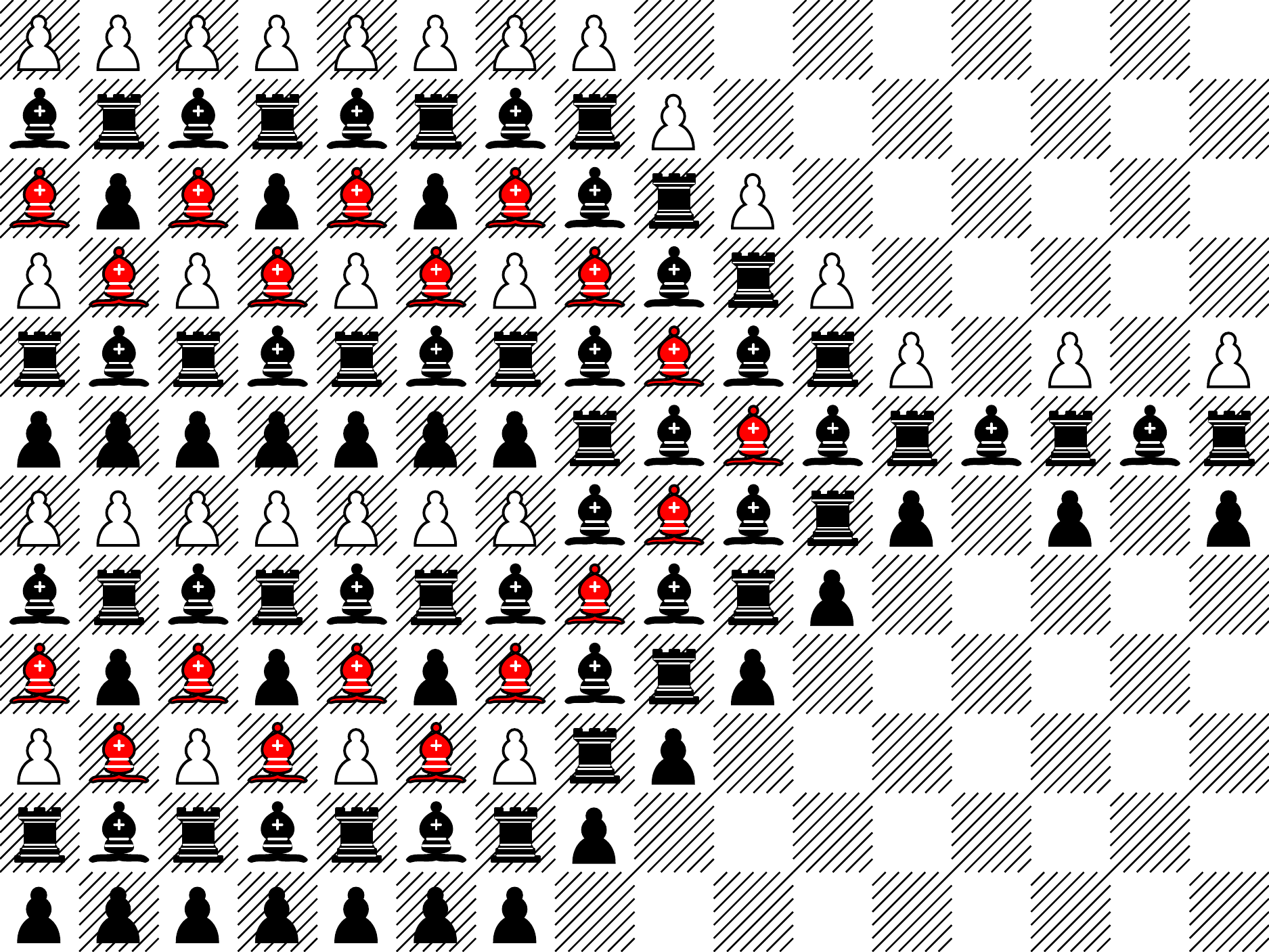}
    \caption{A U-turn gadget for edges.}
  \end{subfigure}
  \caption{Edge routing gadgets: shift and U-turn.}
  \label{retro turn shift}
\end{figure}

\begin{figure}
  \centering
  \includegraphics[width=.8\linewidth]{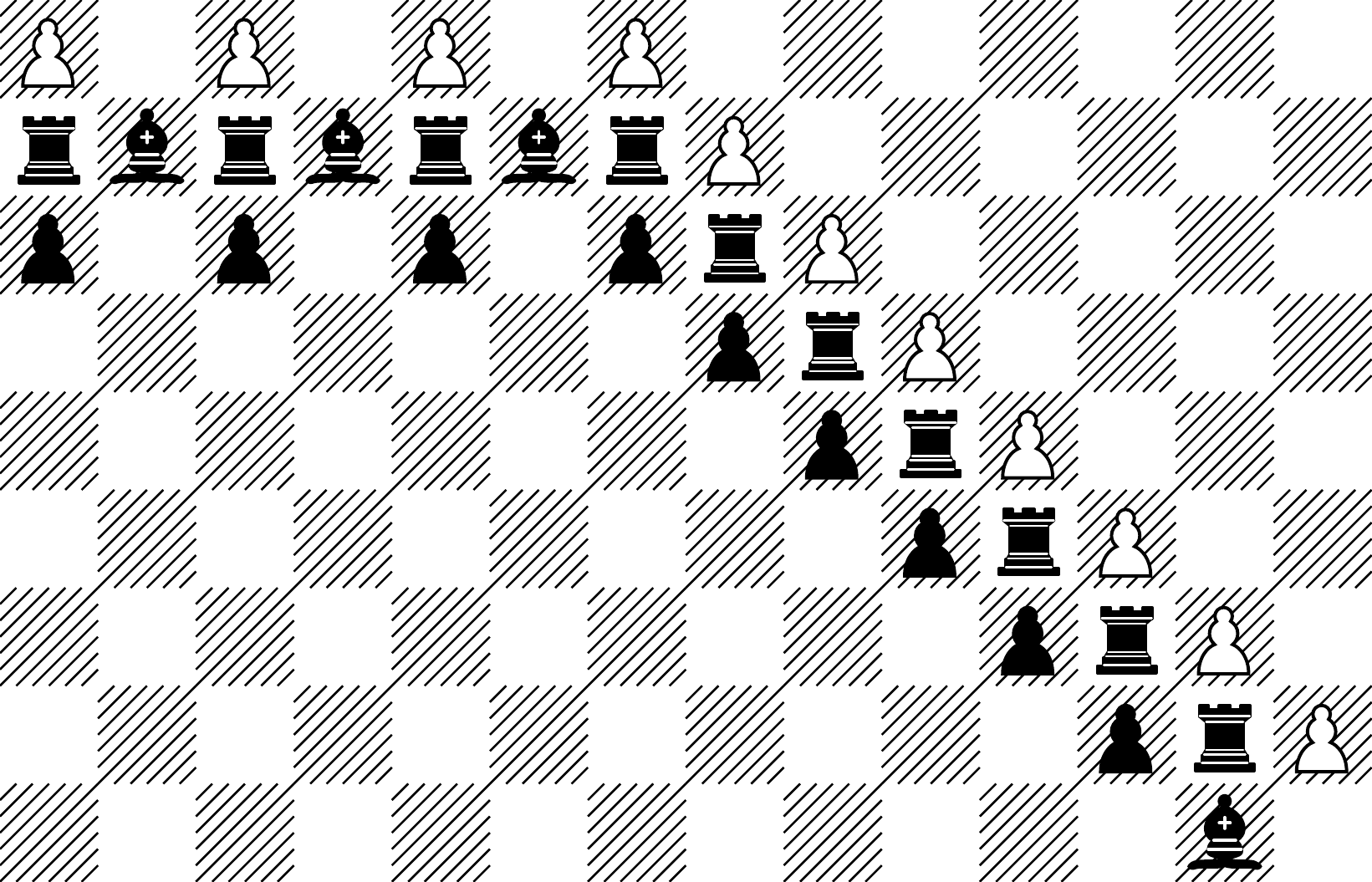}
  \caption{Terminator gadget that connects the loose ends of gadgets to the bottom rank. Unlike other figures, here we care that the first rank of this figure is the actual first rank of the Chess board. In particular, this means that the white pawn on the second rank cannot unmove, and that allows us to prove that everything is stuck.}
  \label{retro terminator}
\end{figure}

At the edge of the turn gadget, we have the \defn{terminator gadget}, shown in Figure~\ref{retro terminator}. As previously stated, because every piece is capable of unmoving to both adjacent files, we need a terminator gadget which connects these loose ends to the edge of the board. We have all of our terminator gadgets terminate either on the first rank of the board or on any other gadget. The terminator gadget in Figure~\ref{retro terminator} terminates on the first rank. To have one terminate on another gadget, it simply runs (diagonally) into the wall of pawns on either side of any of our gadgets, including another terminator. We will have only a single terminator gadget on each side of the construction terminate on the first rank, and all others will terminate on another gadget.

\begin{figure}
  \centering
  \includegraphics[width=.8\linewidth]{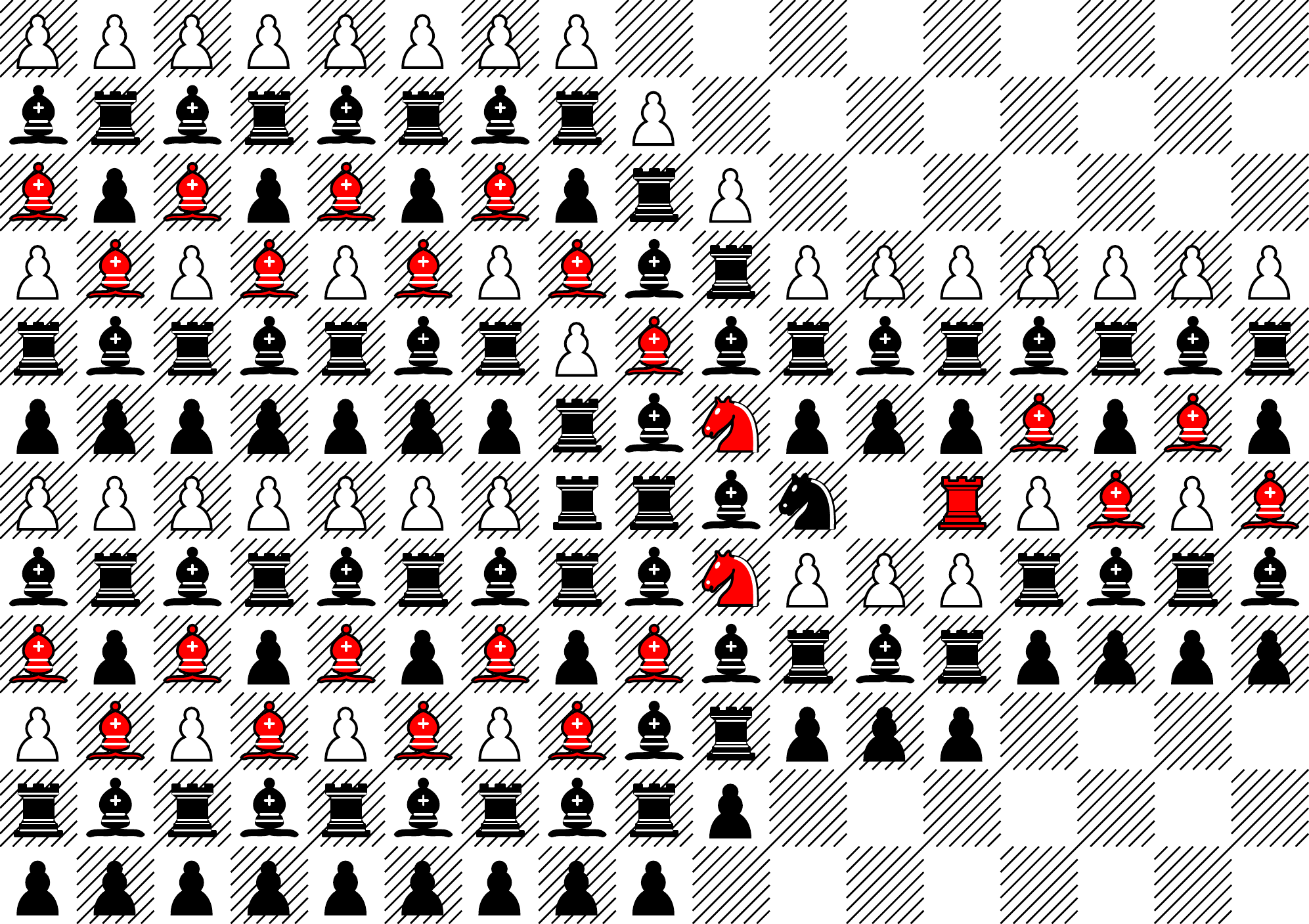}
  \caption{The vertex gadget. The empty square in the middle is the vertex; whether it is occupied by a knight or a rook determines the color of the token at this vertex.}
  \label{retro vertex}
\end{figure}

Now we describe the \defn{vertex gadget}, shown in Figure~\ref{retro vertex}. This vertex has a similar structure to the vertex gadget from the previous section, with potentially two knights and a rook representing the three tokens that can move in from connecting edges into the vertex. The two edges connected by a knight to the vertex are the orange edges; the rook is the purple edge. When both knights are present, we get a vertex which has both orange edges pointing into the vertex. If only one knight is present and the other is replaced by a bishop, only the edge with the knight points into the vertex and the other orange edge points out of the vertex.

\begin{figure}
  \centering
  \includegraphics[width=.8\linewidth]{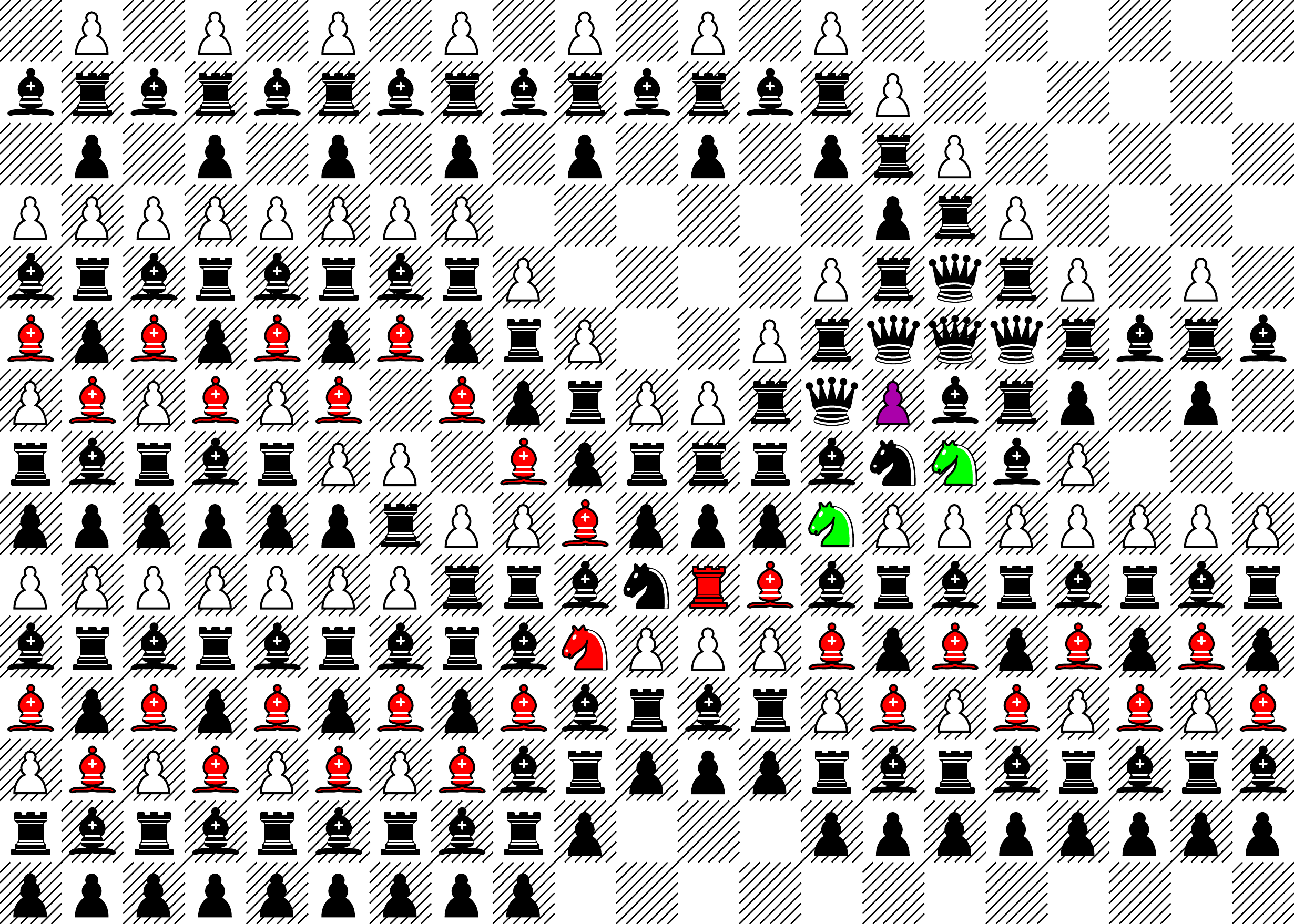}
  \caption{The win gadget. The purple pawn is a white pawn. If a knight leaves the vertex by the top edge, it allows the two white knights highlighted in green to follow it. Then the purple pawn can uncapture a knight where a knight was, which can move away, starting the unravelling process.}
  \label{retro win}
\end{figure}

\begin{figure}
  \centering
  \includegraphics[width=.8\linewidth]{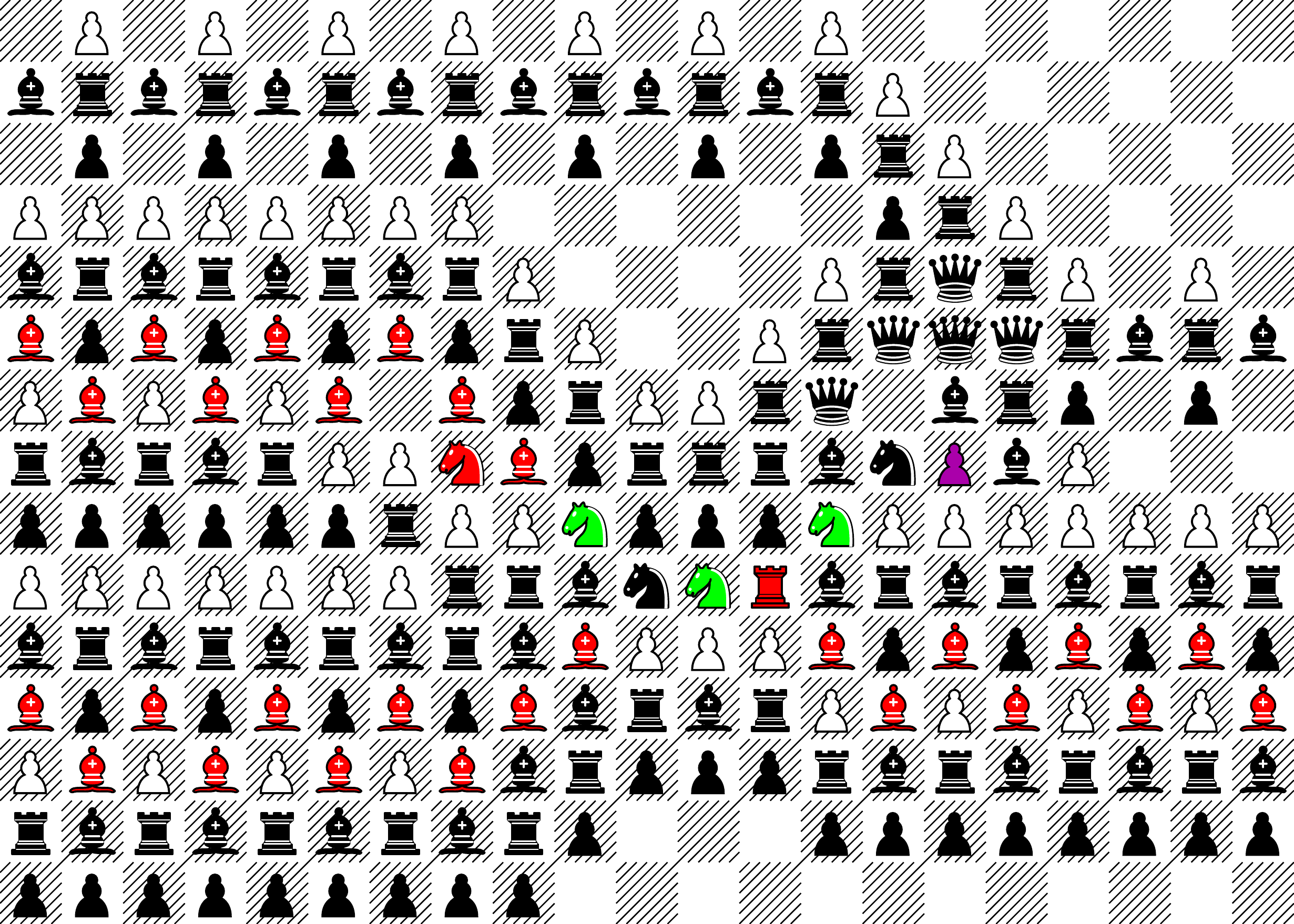}
  \caption{The win gadget after the first few unmoves to begin the unravelling are made. From here, the queens can leave allowing the terminator gadget in the top right to start unravelling.}
  \label{retro start winning}
\end{figure}

\subsection{Win Gadget and Self-Destruction}

Finally we have the \defn{win gadget} shown in Figure~\ref{retro win}. Here we have modified the vertex gadget to add an extra hole a knight's move away from the top orange edge's connection to the vertex. When the top edge is used by having a knight enter it from the vertex, it can hop into this hole. This creates a second hole, which will be key to allowing the construction to unravel. Protruding from the top left and top right of the gadget are two terminator gadgets. The unravelling of these will be crucial to winning.

Normally, the green knight just to the right of the vertex is capable of unmoving to the vertex when it is empty. After this the second green knight can follow it unmoving into the square it just left. This then allows the purple (white) pawn to uncapture the square the knight was on. However, regardless of which piece the pawn uncaptures, the new piece is completely stuck and incapable of moving.

However, if there were a second hole for the green knights to jump into, then once the green knights unmove again, the purple pawn uncaptures a knight, which can then unmove to where one of the knights was. This is shown in Figure~\ref{retro start winning}. At this point, the queens and rooks can shift around to let the queens start escaping. From here, everything begins to unravel. This is where the two terminator lines come in. Once a few queens leave, both of these lines can start to unravel.

We choose a layout of the Subway Shuffle instance such that the win vertex is the furthest north vertex, and these two terminator lines are on the outside face of the graph. We extend these lines very far away from the rest of the construction, which is possible because the choice of layout implies no other part of the construction is in as high a rank as win vertex. We have any other terminator lines from U-turn gadgets which have not already terminated on another gadget terminate on these two terminator lines. Only these two terminators will eventually reach the first rank of the board, as shown in Figure~\ref{retro terminator}. 

Once these two lines have unravelled, it is now the case that our construction is in a region in the middle of the board with no pieces on either side of it. This means we can repeatedly apply Corollary~\ref{empty file lemma}, until every piece has left the construction. It is fairly easy once all of the pieces are free in the middle of the board to find a sequence of unmoves to send them home.

\subsection{Counting Pieces}

Now we need to do a piece counting argument, to show that the position is even plausible. One property of a starting Chess position is that it has only one pawn of each color in each file. Not only this, but pawns also cannot stray too far from their starting file. In particular, a pawn on rank $n$ must come from a file at most $n$ files away from its current position. Unfortunately, our construction can have many pawns in each file, and furthermore is constrained in how far it can be away from the bottom edge of the board due to the terminator gadgets. 

However, the terminator gadget has the property that along the horizontal part, it has a white pawn (and similarly black pawn) density of only one pawn every two files. If we stretch the horizontal part far enough, and put the construction on a sufficiently far forward rank, we can use this to get the pawn density below one pawn per file. As long as this area with low pawn density is at a higher rank on the board than the number of files it is wide, with enough uncaptures the pawns can sort themselves into one pawn per file. The number of uncaptures required is at most quadratic in the number of pawns.

We also need to check the number of non-pawn pieces. To make sure that the board has the right amount of pieces, we simply have the board be much larger than our construction. Our pawns will need to make a large number of uncaptures during the unravelling, and to handle this we will have the board be much larger than the number of pieces in our construction. It is always possible to keep uncapturing additional pieces and pawns so we do not need to worry about having too large of a board.

Finally, every legal Chess position needs to have one king of each color. Since we don't use kings anywhere in the construction, and their ability to roam free doesn't allow the players to unravel the position without solving the Subway Shuffle instance, we simply put the two kings in their home positions. This also ensures that both players always have legal unmoves allowing white and black to alternate making unmoves as is required in Chess.

\section*{Acknowledgments}

This work was initiated during open problem solving in the MIT class on
Algorithmic Lower Bounds: Fun with Hardness Proofs (6.892) in Spring 2019.
We thank the other participants of that class --- in particular, John Urschel
--- for related discussions and providing an inspiring atmosphere.

\bibliographystyle{alpha}
\bibliography{paper}

\end{document}